\let\sqcup=\barwedge
\let\amalg=\doublebarwedge
\newcommand{\ud}{\, \mathrm{d}}
\theoremstyle{definition} \newtheorem{theorem}{Theorem}[section]
\theoremstyle{definition} 
\theoremstyle{definition} 
\theoremstyle{definition} \newtheorem{corollary}[theorem]{Corollary}
\theoremstyle{definition} 
\newcommand{\eat}[1]{}
\begin{document}

\title{Degradation Analysis of\\ Probabilistic Parallel Choice Systems}
\author{Avinash Saxena\footnote{avinashsxna@iitkgp.ac.in} \ and Shrisha
  Rao\footnote{shrao@ieee.org}}
\maketitle

\begin{abstract}
Degradation analysis is used to analyze the useful lifetimes of
systems, their failure rates, and various other system parameters like
mean time to failure (MTTF), mean time between failures (MTBF), and
the system failure rate (SFR).  In many systems, certain possible
parallel paths of execution that have greater chances of success are
preferred over others.  Thus we introduce here the concept of
\emph{probabilistic parallel choice}.  We use binary and $n$-ary
probabilistic choice operators in describing the selections of
parallel paths.  These binary and $n$-ary probabilistic choice
operators are considered so as to represent the complete system
(described as a series-parallel system) in terms of the probabilities
of selection of parallel paths and their relevant parameters.  Our
approach allows us to derive new and generalized formulae for system
parameters like MTTF, MTBF, and SFR.  We use a generalized exponential
distribution, allowing distinct installation times for individual
components, and use this model to derive expressions for such system
parameters. \\

\emph{Keywords}: reliability block diagram (RBD); reliability time
estimation (RTE); mean time to failure (MTTF); mean time between
failures (MTBF); mean time to repair (MTTR); system failure rate
(SFR); probability density function (pdf); probabilistic choice;
non-probabilistic choice.
\end{abstract}

\vspace{0.15in}

\section*{Notation}

\begin{tabular}{@{}rp{9cm}@{}}
$X_i$ \,: & Component with index $i$ \\
$P(X_i)$ \,: & Probability that component $X_i$ works successfully \\
$P_{\delta}(X_i)$ \,: & Probability that component $X_i$ fails \\
$R_k$ \,: & Parallel path $k$ \\
$P(R_k)$ \,: & Probability that parallel path $R_k$ works
successfully \\
$P_{\delta}(R_k)$ \,: & Probability that parallel path $R_k$ fails \\
$S$ \,: & A complete system \\
$P(S,t \geq T)$ \,: & Probability of success of system $S$ for \(t
\geq T\) \\
$P_{\delta}(S,t \geq T)$ \,: & Probability of failure of system $S$
for $t \geq T$ \\
$\oplus$ \,: & Non-probabilistic choice operator \\
$\otimes$ \,: & Sequential choice operator \\
$\sqcup_{\psi}$ \,: & Probabilistic choice operator (binary case) \\
$\amalg_{\psi_k}$ \,: & $k$th Probabilistic choice operator
($n$-ary case) \\
$\psi_k$ \,: & Probability with which parallel path $R_k$ is
chosen \\
$\hat t_{0i}$ \,: & Installation time of component $X_i$ \\
$t_{\infty}$ \,: & Time at which probability of successful working
of a system is almost zero \\
$\rho$ \,: & Threshold value of probability of success of system
for system to be reliable in determination of mean time to failure \\
$\lambda_i$ \,: & Failure rate of component $X_i$ which determines
how fast the probability of success decays with time \\
$\lambda_{eq}$ \,: & The equivalent system failure rate given
components with $\lambda_i$ individual failure rates \\
\end{tabular}

\section{Introduction}

Degradation analysis is used to determine system parameters related to
reliability, and is commonly done using reliability block diagrams
(RBDs), assuming time-dependent distributions like exponential,
Weibull, normal, etc.  We propose a model to compute more generalized
formulae for these parameters, and indicate such generalized formulae
for the exponential distribution in particular.  Traditional modelling
techniques such as RBDs treat each component as a node in a diagram;
several such nodes combine in series or parallel combinations to form
a system.  Each node is associated with a time-dependent probability
distribution.  The probabilities of success and failure of these
components are assumed to vary according to the distribution
considered.  Further, the system parameters are calculated based on
the distribution considered.

A major accomplishment of this work lies in development of new and
better formulae for mean time to failure (MTTF), system failure rate
(SFR), mean time between failures (MTBF), reliability time estimation
(RTE), probability density function (pdf), and mean time to repair
(MTTR), using the new approach of probabilistic parallel choice.  Like
the series-parallel systems considered in classical works such as
Bazovsky~\cite{book1} and Birolini~\cite{book2}, we assume each complex
system to be composed of series and parallel combinations of
components.  We use probabilistic choice operators to represent a
system having components in parallel.  The probabilistic choice in
selection of parallel paths extends the traditional model for systems,
as the probabilities of selection of parallel paths are not equally
likely (as is implicitly the case in all prior works).

There are four operators in our model.  These include probabilistic
choice operators for both binary and $n$-ary cases (compare with
Andova~\cite{andova}).  Two other operators, the non-probabilistic
choice operator and the sequential operator, are also used.  Classical
RBDs use a non-probabilistic choice operator in considering selection
among parallel paths, so that all the parallel paths are assumed to be
identical and each path’s selection is considered equally likely.
This assumption in RBDs and related models is certainly a limitation,
as some systems may choose certain paths with higher probabilities
than others, e.g., when a system is designed so that a more
reliable---or less expensive---path shall be chosen prior to, or more
often than, another.

The relevance of probabilistic parallelism thus lies in the fact that
parallel paths are not identical in many real systems.  There can be
some associated advantages and disadvantages in traversing particular
paths: a particular path may be faster but at the same time costlier,
or it may be safer or less disruptive than another path.

A binary probabilistic choice operator as in Andova~\cite{andova}
applies in case of two paths in parallel where there is a
probabilistic choice attached to the selection of each parallel path.
An $n$-ary probabilistic choice operator, also as in
Andova~\cite{andova} applies with $n$ paths in parallel, if there
exists a probabilistic choice in selection of each of the parallel
paths.  Later, in Section~\ref{quantify} we present a method to
quantify the probability $\psi_k$ of selection of a given path $R_k$.
The probabilistic choice operators are used to represent two or more
paths in parallel, with distinct probabilities of selection attached
to all paths.  Unlike in RBDs, each path in parallel is treated
differently, and is given a priority based on the chances of success
associated with it.  This allows us to come up with a better
degradation analysis of systems.  The series system is similar to that
described in Gottumukkala~\cite{GottumukkalaNR}, Bazovsky~\cite{book1}
and Birolini~\cite{book2}, and uses a sequential operator to represent
consecutive components or subsystems in series.  In
Section~\ref{PATFT}, we introduce time-dependent probabilities of
success and failure using the exponential distribution.  In our
analyses, each component is assumed to have a probability of success
and failure that varies exponentially with time.

It is well known that when a system breaks down, not all components of
the system are considered as faulty.  Replacement or repair is
generally considered essential only for those components that have
actually ceased to work properly, and have contributed to the current
breakdown of the system.  Thus, assuming an identical age for each
component in performing degradation analysis of a system is usally not
a very sound idea---some of the components in the system may have
different ages for which they have been operational, and thus their
actual probabilities of success and failure can be quite different
from that obtained by considering identical ages for all components.
We thus assume that each component has a lifetime after which it is to
be repaired or replaced, and that after it is repaired or replaced,
the component is as good as new.  The installation time is the time at
which the component is inserted \emph{ab initio}, is repaired, or is
replaced.  For a time close to an installation time, the chances of
successful working of the component are high, and as the time
increases the chances of its successful working decay.  Earlier models
assumed installation times for all components to be fixed and
identical~\cite{book1,book2}, but in our model we consider different
times of installation for different components in general.  Thus the
probabilities of success of individual components can be assumed to
follow an exponential distribution for times greater than their
installation times.  Based on this, we derive better formulae in
degradation analysis for system parameters like MTTF, SFR, MTBF, MTTR,
and pdf.  The formulae obtained are more generally applicable compared
to previous ones, as earlier work did not incorporate a probabilistic
choice in selection of parallel paths, or variable installation times.

\section{System Model} \label{concept}

A \emph{system} in our model comprises of a series and parallel
combination of independently-working units called \emph{components}.
These components have attached probabilities of success and failure.
In general, these probabilities may be constant or time-dependent,
depending on the type of distribution followed.  We also assume that
all components in a system are independently-working units of that
system, so the success or failure of any component is independent of
the success or failure of other components.

The probability of success of a system is a measure of its overall
performance.  For each component $X_i$, the probability of successful
working of the component is given by $P(X_i)$ and a probability of
failure of the component given by $P_{\delta}(X_{i})=1-P(X_i)$.  The
success or failure of the complete system is dependent upon the
success and failure of the individual components of which it is
composed.

The system is assumed to be repairable system, that is, replacement or
repair of the components is undertaken if they cease to work causing
failure of the entire system.  Any component so repaired or replaced
is assumed to be as good as a new one.

In our model, we assume that each parallel path $R_k$ is chosen with a
different probability $\psi_k$, and that these selection probabilities
of parallel paths remain even though the probabilities of success and
failure of individual components in the paths may be vary over time.
We further assume (without loss of generality) that each parallel path
in a system contains exactly one component.

\subsection{Related Operators}

Our model has four basic operators.  These operators are used for 
representing series and parallel combinations of components.

\begin{itemize} \label{operators}

\item Non-probabilistic choice operator $(\oplus)$:

  $R_1\oplus R_2$ means that the probability of choosing path $R_1$ is
  equal to the probability of choosing path $R_2$.

\item Sequential operator $(\otimes)$:

  The sequential operator $\otimes$ is used such that \(X_1\otimes X_2\)
  implies that the probability of choosing $X_1$ is equal to the
  probability of choosing $X_2$.

\item Binary probabilistic choice operator $\sqcup_{\psi}$:

  \(R_1\sqcup_{\psi}R_2\) represents a parallel system with parallel
  paths $R_1$ and $R_2$, where the probability of selection of path
  $R_1$ is equal to $\psi$, and the probability of selection of path
  $R_2$ is equal to $1-\psi$, with \(0 \leq \psi \leq 1\).

\item $n$-ary probabilistic choice operator \((\amalg_{\psi_k})\),
  with \(n\geq 2\):

  \(R_1\amalg_{\psi_1} R_2 \amalg_{\psi_2} \ldots \amalg_{\psi_{n-1}}
  R_n\) represents a parallel system with $n$ parallel paths $R_1$
  through $R_n$, where the probability of choosing path $R_k$ is
  $\psi_k$, \(1 \leq k \leq n-1\), and the probability of choosing
  path \(R_n$ is $1-\sum_{k=1}^{n-1}\psi_k\).

\end{itemize}

We now present a basic theorem relating the binary and $n$-ary
probabilistic choice operators.  It is useful in calculating the
overall probability of success or failure of complex systems in terms
of the relevant parameters of their components.

\begin{theorem} \label{th1}
The following hold with respect to $\sqcup_{\psi}$ and $\amalg_{\psi_k}$.

\begin{itemize}

\item[$\mathrm{(a)}$] \(P\left(R_1 \sqcup_{\psi} R_2 \right)= \psi
  P(R_1)+(1-\psi) P(R_2)\)

\item[$\mathrm{(b)}$] \(P\left(R_1 \amalg_{\psi_1} R_2
  \amalg_{\psi_2}R_3\ldots \amalg_{\psi_{n-1}} R_n\right) \\= \psi_1
  P\left( R_1 \right)+\psi_2 P\left( R_2 \right) +\ldots+\psi_{n-1}
  P\left( R_{n-1} \right)\\ + \left( 1 -\sum_{k=1}^{n-1} \psi_k
  \right) P\left( R_n\right)\)
\end{itemize}
\end{theorem}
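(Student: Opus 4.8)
The plan is to treat each probabilistic choice operator as inducing a genuine probability distribution over which parallel path is selected, and then to apply the law of total probability by conditioning on the selected path. For part (a), the operator $\sqcup_{\psi}$ selects $R_1$ with probability $\psi$ and $R_2$ with probability $1-\psi$ by definition, and these two selection events are mutually exclusive and exhaustive. First I would observe that once a path is in force, the compound system $R_1 \sqcup_{\psi} R_2$ succeeds exactly when that path succeeds, so the conditional probability of system success given that $R_i$ is selected equals $P(R_i)$. Conditioning on the selected path then gives $P(R_1 \sqcup_{\psi} R_2) = \psi P(R_1) + (1-\psi) P(R_2)$, which is the claim.

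For part (b) I would proceed by induction on the number of paths $n$, taking part (a) as the base case $n = 2$. The key structural observation is that the $n$-ary system splits as a binary probabilistic choice between $R_1$, chosen with probability $\psi_1$, and the residual $(n-1)$-path subsystem on $R_2, \ldots, R_n$, chosen with probability $1 - \psi_1$. Conditional on $R_1$ not being selected, the remaining paths must be reweighted: path $R_k$ is chosen with conditional probability $\psi_k/(1-\psi_1)$ for $2 \leq k \leq n-1$, and $R_n$ with conditional probability $(1 - \sum_{k=1}^{n-1}\psi_k)/(1-\psi_1)$. A short computation confirms that these conditional weights sum to one, so the residual subsystem is a legitimate $(n-1)$-path probabilistic choice and the induction hypothesis applies to it.

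Applying part (a) to the outer binary split and then substituting the inductive formula for the residual subsystem, the factors of $1 - \psi_1$ cancel against the denominators of the reweighted probabilities, collapsing the expression to $\sum_{k=1}^{n-1} \psi_k P(R_k) + (1 - \sum_{k=1}^{n-1}\psi_k) P(R_n)$, exactly as stated. Alternatively one can bypass the induction and argue directly: the $n$ selection events are mutually exclusive and exhaustive with probabilities $\psi_1, \ldots, \psi_{n-1}, 1 - \sum_{k=1}^{n-1}\psi_k$ that sum to one, so a single application of the law of total probability over these $n$ cases yields the formula at once.

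I do not expect a serious obstacle, since both routes reduce to elementary conditioning. The one point that requires care is the reweighting in the induction: one must check that the conditional selection probabilities on $R_2, \ldots, R_n$ still sum to one, so that the induction hypothesis is genuinely applicable, and then track the cancellation of the $1 - \psi_1$ factor so that no spurious normalization survives. In the direct argument the only thing to confirm is that the $n$ path-selection weights form a valid probability distribution, which is immediate from the definition of $\amalg_{\psi_k}$.
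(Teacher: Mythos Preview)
Your proposal is correct. Part~(a) is exactly the paper's argument: condition on the selected path and apply the law of total probability. For part~(b), the paper does not induct; it simply applies the total probability formula directly over the $n$ mutually exclusive selection events, which is precisely the alternative route you sketch at the end. Your primary inductive argument is a legitimate and slightly more structural alternative: it shows that the $n$-ary operator can be unfolded into nested binary choices, at the cost of the reweighting bookkeeping you flag. The paper's direct route is shorter and avoids that normalization check, while your induction makes explicit the compatibility between $\sqcup_\psi$ and $\amalg_{\psi_k}$.
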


\begin{proof}
For part (a), we note that \(\left(R_1\sqcup_{\psi} R_2 \right)\)
represents a system which has paths $R_1$ and $R_2$ as alternatives,
with the probability of choosing path $R_1$ as $\psi$ and probability
of choosing path $R_2$ as $1-\psi$.  Thus the probability of success of
the entire system \(\left(R_1 \sqcup_{\psi} R_2 \right)\) can be written
with the basic total probability formula:
\begin{eqnarray} \label{eq23}
P(A)=\sum_{k=1}^{n}P(A|X_k)P(X_k)
\end{eqnarray}

Using~\eqref{eq23} we can compute \(P\left(R_1 \sqcup_{\psi} R_2 \right)\) as:
\begin{equation*}
P \left(R_1 \sqcup_{\psi} R_2 \right) = P\left((R_1 \sqcup_{\psi} R_2) | R_1\right) P(R_1) + P\left((R_1\sqcup_{\psi} R_2) | R_2\right) P(R_2)
\end{equation*}

Here $P(R_1)$ and $P(R_2)$ denote the probabilities of successful
working of paths $R_1$ and $R_2$.

From this, we get:
\begin{equation*}
P \left(R_1 \sqcup_{\psi} R_2\right) = \psi P(R_1)+(1-\psi) P(R_2)
\end{equation*}

Hence part (a) holds.

For part (b), we similarly extend the formula to $n$ components in
parallel, with the probability of selection of the parallel path $R_k$
being as previously given.

Using the total probability formula~\eqref{eq23} we can compute the
value of \(P\left(R_1 \amalg_{\psi_1} R_2 \amalg_{\psi_2} R_3 \ldots
\amalg_{\psi_{n-1}} R_n \right)\) as:
\begin{equation*}
\sum_{k=1}^{n-1}P(\left(R_1 \amalg_{\psi_1} R_2 \amalg_{\psi_2} R_3
  \ldots \amalg_{\psi_{n-1}} R_n \right)|R_k)P(R_k),
\end{equation*}

where $P(R_k)$ denotes the probability of successful working of path
\(R_k$, $\forall k \in{1, 2, \ldots, n}\).  Thus we get:
\begin{eqnarray*}
P\left(R_1 \amalg_{\psi_1} R_2 \amalg_{\psi_2} R_3\ldots \amalg_{\psi_{n-1}} R_n\right) &= \psi_1 P \left( R_1 \right) + \psi_2 
P\left( R_2 \right) +\ldots +\psi_{n-1} P\left( R_{n-1} \right) \\&+ \left( 1 -\sum_{k=1}^{n-1} \psi_k \right) P\left( R_n \right).
\end{eqnarray*}

Hence part (b) holds. 
\end{proof}

In line with our assumption about parallel paths having exactly one
component apiece, an obvious corollary may be stated.

\begin{corollary} \label{th2}
If each parallel path has exactly one component $X_i$ in it, the
following is true of $\sqcup_{\psi}$ and $\amalg_{\psi_i}$.

\begin{itemize}

\item[(a)] \(P\left(X_1 \sqcup_{\psi_i} X_2 \right)= \psi_i P(X_1)+(1-\psi_i) P(X_2)\)

\item[(b)] \(P\left(X_1 \amalg_{\psi_1} X_2 \amalg_{\psi_2}X_3\ldots \amalg_{\psi_{n-1}} X_n\right) \\= \psi_1 P\left( X_1 \right)+\psi_2 P\left( X_2 \right) +\ldots+\psi_{n-1} P\left( X_{n-1} \right)\\ + \left( 1 -\sum_{i=1}^{n-1} \psi_i \right) P\left( X_n\right)\)
\end{itemize}

\end{corollary}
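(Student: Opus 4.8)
The plan is to obtain both parts of the corollary as immediate specializations of Theorem~\ref{th1}, leaning on the modeling assumption (stated at the end of Section~\ref{concept}) that each parallel path contains exactly one component. The one conceptual point that must be made explicit is that when a path $R_k$ consists of a single component $X_k$, the path and the component are the same operational unit, so the probability that the path works successfully coincides with that of its lone component: $P(R_k) = P(X_k)$ for each $k$. Once this identification is in place, each part follows by pure substitution.

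First, for part (a), I would apply Theorem~\ref{th1}(a) taking $R_1$ and $R_2$ to be the singleton paths containing $X_1$ and $X_2$. Substituting $P(R_1) = P(X_1)$ and $P(R_2) = P(X_2)$ into $\psi P(R_1) + (1-\psi)P(R_2)$ and writing the binary choice probability as $\psi_i$ (to match the corollary's notation) yields $\psi_i P(X_1) + (1-\psi_i)P(X_2)$, which is exactly the claimed identity. For part (b), I would apply Theorem~\ref{th1}(b) to the $n$ singleton paths $R_1,\ldots,R_n$ housing $X_1,\ldots,X_n$, and replace each $P(R_k)$ by $P(X_k)$ throughout the $n$-ary expansion; this produces $\psi_1 P(X_1)+\cdots+\psi_{n-1}P(X_{n-1})+\bigl(1-\sum_{i=1}^{n-1}\psi_i\bigr)P(X_n)$ directly.

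There is no genuine obstacle here, since the corollary is a transcription of Theorem~\ref{th1} under a restriction already assumed in the model; the only thing worth stating carefully is the equality $P(R_k)=P(X_k)$, from which both identities drop out by substitution. Accordingly, I would keep the argument to a sentence establishing that identification followed by two one-line invocations of the theorem.
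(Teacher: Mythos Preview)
Your proposal is correct and matches the paper's own treatment: the paper offers no separate proof, simply presenting the corollary as the obvious specialization of Theorem~\ref{th1} under the standing assumption that each parallel path contains exactly one component. Your explicit justification that $P(R_k)=P(X_k)$ in this case, followed by substitution into parts (a) and (b) of Theorem~\ref{th1}, is precisely the intended (and only) step.
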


Every system in this model is assumed to be composed of some series
and parallel combination of components.  The two basic types of
systems, that is, series and parallel systems, are described in
Sections~\ref{series} and~\ref{parallel} using operators defined in
Section~\ref{operators}.

\subsection{Series System} \label{series}

An $n$-component series system with $n \geq 2$ components $X_i$ fails
if any one of the $X_i$ fails to work.

\(P(X_1), P(X_2), \ldots, P(X_n)\) denote the probabilities of success
of components \(X_1, X_2, \ldots, X_n\).  Thus the probabilities of
failure of components \(X_1,X_2,\ldots,X_n\) are \(1-P(X_1),
1-P(X_2),\ldots,1-P(X_n)\).

A series system with $n$ components \((X_1, X_2, \ldots, X_n)\) can be
represented as:
\begin{eqnarray*}
X_1\otimes X_2\otimes X_3 \ldots \otimes X_n.
\end{eqnarray*}

As already mentioned, the components of the system are assumed to be
independently working units, and thus the probability of success of
the series system is given by:
\begin{eqnarray*} \label{eq1}
P(X_1\otimes \ldots \otimes X_n) = P(X_1) P(X_2)\ldots P(X_n) 
\end{eqnarray*}

This formula is similar to that given by Yu and Fuh~\cite{Yu}.

The two possible outcomes of the system are either a success or a
failure, which are mutually exclusive and cumulatively exhaustive
events.

Thus, the probability of failure of the series system, as by Yu and
Fuh~\cite{Yu}, is given by:
\begin{eqnarray} \label{eq212}
P_{\delta}(X_1\otimes \ldots \otimes X_n) & = &
1-P(X_1\otimes X_2\otimes \ldots \otimes X_n)
\\
& = & 1-P(X_1)P(X_2)\ldots P(X_n)
\end{eqnarray}

The overall probability of success of the series system is smaller
than the component with the least probability of success, as
overall probability of success of the series system is given by
product of probability of success of individual components in series.

A failed component in series with any other component leads to failure
of the latter also.

\subsection{Parallel System} \label{parallel}

An $n$-path parallel system is one in which $n \geq 2$ parallel paths
$R_j$ are involved, where the system fails only if all the parallel
paths $R_j$ fail to work.

Let \(P(R_1), P(R_2),\ldots, P(R_n)\) denote the probabilities of
success of parallel paths \(R_1,R_2,\ldots,R_n\).  As the two possible
outcomes of any parallel path are a successful working of the path or
its failure, the probability of failure of paths
\(R_1,R_2,\ldots,R_n\) is given by
\(1-P(R_1),1-P(R_2),\ldots,1-P(R_n)\) respectively.

\subsubsection{Two-Path Parallel System}

We first discuss systems which contain exactly two paths in parallel.
For such systems the choice is represented using operators $\oplus$
and $\sqcup_{\psi}$.

If a process can be executed by either of two parallel paths $R_1$ or
$R_2$, that is, $R_1$ and $R_2$ are in parallel and both are equally
likely in terms of their selection, then this choice is represented as
$R_1 \oplus R_2$ and called non-probabilistic choice.  This is similar
to the operator used in reliability block diagrams to represent a
parallel system.

If a process can be executed by either of two parallel paths $R_1$ or
$R_2$, that is, $R_1$ and $R_2$ are in parallel and both are chosen
with different probabilities of selection, then this choice is
represented as $R_1\sqcup_{\psi}R_2$ and called binary probabilistic
choice, as in Andova~\cite{andova}.

\subsubsection{$n$-Path Parallel System}

We now discuss systems which contain $n$ parallel paths, where $n\geq
2$.  For such systems the choice is represented using
$\amalg_{\psi}$.

If a process in a system can be executed by any of the $n$ parallel
paths \(R_k$, $\forall k \in (1, 2, \ldots, n)\), where the probability
of selection of path $R_k$ is given by $\psi_k$ and
$\sum_{k=1}^{n}\psi_k=1$, then for such a system the choice is
represented using $\amalg_{\psi}$ as:
\begin{eqnarray} \label{eq123}
R_1 \amalg_{\psi_1} R_2 \amalg_{\psi_2} \ldots \amalg_{\psi_{n-1}} R_n.  
\end{eqnarray}

Using Theorem~\ref{th1}, the probability of success of the parallel
system, which we may denote by
\(P(R_1\amalg_{\psi_1}R_2\amalg_{\psi_2}\ldots\amalg_{\psi_{n-1}}R_n)\),
is given by:
\begin{eqnarray} \label{eq11}
\psi_1 P(R_1) + \psi_2 P(R_2)+ \ldots+ (1-\sum_{k=1}^{n-1}\psi_k) P(R_n).
\end{eqnarray}

As the two possible outcomes of a system are success and failure, the
probability of failure of the parallel system is given by:
\begin{equation*}
P_{\delta}(R_1 \amalg_{\psi_1}R_2 \amalg_{\psi_2} \ldots \amalg_{\psi_{n-1}}R_n) = 1-P(R_1\amalg_{\psi_1} R_2 \amalg_{\psi_2} \ldots \amalg_{\psi_{n-1}} R_n)
\end{equation*}

This leads to:
\begin{equation} \label{eq22}
P_{\delta} (R_1 \amalg_{\psi_1}R_2 \amalg_{\psi_2} \ldots \amalg_{\psi_{n-1}}R_n) = 1-\sum_{k=1}^{n}\psi_k P(R_k),
\end{equation}
\[\mathrm{where} \ \psi_n = 1 - \sum_{k=1}^{n-1} \psi_k.\]

Using Corollary~\ref{th2}, we can arrive at an analogous result with
$X_i$ instead of $R_k$, when each parallel path has exactly one
component.

\subsection{Assigning The Probability Of Selection Of Parallel Path $\psi_k$} \label{quantify}

We now consider a specific method (but obviously not the only one) to
assign the probability of selection $\psi_k$ for a parallel path
$R_k$, under the assumption that the probability of selection of any
path takes into account the probabilities of success and failure of
that path.

For any path $R_k$, we may have
\begin{eqnarray} \label{1}
\psi_k\propto\frac{P(R_k)}{P_{\delta}(R_k)},
\end{eqnarray}

and of course also,
\begin{eqnarray} \label{111}
\sum_{k=1}^{n}\psi_k=1{}. 
\end{eqnarray}

Between~\eqref{1} and~\eqref{111}, we can calculate and assign values to
the $\psi_k$.

\section{Probabilistic Approach To Reliability} \label{PATFT}

It is well known that considerations of probability distributions in
reliability analyses allow us to consider time-dependent functioning
of systems.  The probability of success is assumed to vary with time
according to some distribution.  This is a reasonable assumption, as
the probability of success or failure of any system depends on the
time for which it has been used.  We assume that all the
differently-aged components start working (or attempts are made to
start them working) at the same time.

As the exponential distribution is the one most commonly applied in
reliability analyses~\cite{book1,book2}, and is thus the basis of a lot
of existing theory, we show how to use our approach, with its
probabilistic parallel choice operators, to derive expressions for
system reliability parameters, assuming this distribution.  We note
that similar analyses are possible with other distributions as well.

\subsection{Exponential Distribution} \label{exp}

A general exponential distribution is given by \(\lambda_{i}
\exp(-\lambda_{i}(t-\hat t_{0i}))\).  Unlike other models, e.g., in Yu
and Fuh~\cite{Yu}, in our analysis we consider $\hat t_{0i}$ as the
time of installation of the component $X_i$, i.e., that different
components of the system may have different times of installation.

The probability of success of component $X_i$ is considered to vary
with time according to the exponential distribution, which of course
means that the probability of success of the overall system falls with
time, or that the probability of failure of the system increases with
time.  If $\lambda_{i}$ is the rate of failure of the component $X_i$,
the probability of success of component $X_i$ at and after time $T$ is
given by:
\begin{eqnarray*}
P(X_i,t \geq T)=\int_{T}^{\infty}\lambda_{i} \exp(-\lambda_{i}(t-\hat
t_{0i}))\ud t{}.
\end{eqnarray*}

This in turn gives:
\begin{eqnarray} \label{X}
P(X_i,t \geq T)=\exp(-\lambda_{i}(T-\hat t_{0i})).
\end{eqnarray}

\begin{enumerate}
\item \emph{Initially at time $t$ = }$\hat t_{0i}$
\[P(X_i,t)=\exp(-\lambda_{i}(0)) = 1.\]

This result is in line with our assumption that initially for time
close to the installation time, the probability of success of the
component is high and is almost one.

\item \emph{As time} \(t \rightarrow {t_{\infty}}\)
\[\lim_{t \to \infty} P(X_i,t \approx t_{\infty}) = \exp(-\lambda_{i}(\infty)) = 0.\]

This is in line with our assumption that at infinite time, the
probability of success of the component $X_i$ is very low, almost
zero.

\end{enumerate}

\subsection{Application Of This Distribution To Estimation Of System Parameters} \label{app}

\subsubsection{Reliability Time Estimation (RTE)}

Reliability time is the time until which the probability of success of
a system is greater than some minimum required probability of success
$\rho$.  This value $\rho$ is externally determined in advance
according to the desired system performance.

The formulae obtained are general and different from those obtained in
Bazovsky~\cite{book1} and Birolini~\cite{book2}, as our model itself is
more general than classical RBDs.

\begin{itemize}
\item[(i)] Generalized $n$-component series system:

As the probability of success of series system should be greater than
$\rho$, we get:
\begin{eqnarray*}
P(X_1\otimes X_2 \otimes X_3 \ldots X_n)\geq \rho
\end{eqnarray*}

Using~\eqref{eq1} and~\eqref{X}, we get:
\begin{equation} \label{snew1}
\rho \leq \exp\Big(-\lambda_{1}(t-\hat t_{01})\Big)\exp\Big(-\lambda_{2}(t-\hat t_{02})\Big) \ldots \exp\Big(-\lambda_{n}(t-\hat t_{0n})\Big)
\end{equation}

Taking the natural logarithm on both sides of
inequation~\eqref{snew1}, we have:
\begin{eqnarray*} \label{RTES}
t \leq -\frac{\ln (\rho)}{\sum_{i=1}^{n}\lambda_i} + \frac{\sum_{i=1}^{n}\lambda_i \hat t_{0i}}{\sum_{i=1}^{n} \lambda_i}.
\end{eqnarray*}

\item[(ii)] Generalized $n$-path parallel system:

We assume that each component $X_i$ is on a separate parallel path and
follows exponential distribution with installation time $t_{0i}$.

Since the probability of success of the parallel system should be
greater than $\rho$, we have:
\begin{eqnarray*}
\rho &\leq P(X_1)\psi_1 + P(X_2)\psi_2+ \ldots + P(X_n) (1-\sum_{i=1}^{n-1}(\psi_{i}).
\end{eqnarray*}

This in turn yields,
\begin{eqnarray} \label{snew5}
\rho \leq \sum_{i=1}^{n}\psi_i \exp\Big(-\lambda_i (t-\hat t_{0i})\Big).
\end{eqnarray}

The inequation~\eqref{snew5} can be solved for the following special
case.  (We are not aware of a closed-form general solution, but see
below for an approximation.)

Assume that all components have identical failure rates and
installation times:
\begin{enumerate}
\item\(\lambda_1 = \lambda_2 = \ldots = \lambda_n = \lambda\); and
\smallskip
\item\(\hat t_{01} = \hat t_{02} = \ldots = \hat t_{0n} = T_{0}\).
\end{enumerate}

This gives:
\begin{equation} \label{snew2}
\rho\leq\exp\Big(-\lambda(t-T_{0})\Big).
\end{equation}

Taking the natural logarithm on both sides of
inequation~\eqref{snew2}, we have:
\begin{equation}
t\leq T_{0}-\frac{\ln{\rho}}{\lambda}.
\end{equation}

Alternatively, we can attempt the general case by assuming a quadratic
approximation for the exponential function, provided the values of
$\lambda_i$ are small enough that cubic and higher order terms can be
disregarded.  This is reasonable as each $\lambda_i$ is supposed to be
very small.  Under this condition, for practically all values of time,
the exponential function can be approximated by a quadratic
expression.

Using~\eqref{eq11} and~\eqref{X}, and since the probability of success
of the system should be greater than $\rho$, we have:
\begin{equation*}
\rho\leq \psi_1 \exp\Big(-\lambda_1 (t-\hat t_{01})\Big) + \psi_2 \exp\Big(-\lambda_2 (t-\hat t_{02})\Big) + \ldots \\ + \Big(1-\sum_{i=1}^{n-1}\exp\Big(-\lambda_i (t-\hat t_{0i})\Big)\Big).
\end{equation*}

Therefore,
\begin{eqnarray} \label{snew3}
\rho \leq &\psi_1 \Big(1-\lambda_1(t-\hat t_{01})+\frac{\lambda_1^2 (t-\hat t_{01})^2}{2!} + \ldots\Big) \nonumber \\& + \ \psi_2 \Big(1-\lambda_2(t-\hat t_{02}) + \frac{\lambda_2^{2} (t-\hat t_{02})^2}{2!} + \ldots\Big) + \ldots \nonumber \\& + \ \Big(1-\sum_{i=1}^{n-1}\psi_i\Big)\Big(1-\lambda_n(t-\hat t_{0n}) +\frac{\lambda_n^{2}(t-\hat t_{0n})^2}{2!}+\ldots\Big),
\end{eqnarray}

taking as before, $\psi_n=(1-\sum_{i=1}^{n-1}\psi_i )$.

The inequality~\eqref{snew3} can be evaluated as:
\begin{eqnarray} \label{snew4}
&\rho\leq \Big(\frac{\sum_{i=1}^{n}(\psi_i (\lambda_i)^2)}{2}\Big)t^{2}-\Big(\sum_{i=1}^{n}\psi_i \lambda_{i}^{2} \hat t_{0i}+\sum_{i=1}^{n}\psi_i \lambda_i\Big)t \nonumber\\&+ \frac{\sum_{i=1}^{n}(\psi_i (\lambda_i)^2 (\hat t_{0i})^2)}{2}+\sum_{i=1}^{n}\psi_i \lambda_i \hat t_{0i} + 1
\end{eqnarray}

By solving the quadratic inequation~\eqref{snew4} and finding roots,
we can compute the region of time where the system is considered
acceptable.

Let $Q$ denote the determinant of the quadratic inequation.
\begin{eqnarray*}
Q&=\Big(\sum_{i=1}^{n}\psi_i \lambda_{i}^{2} \hat t_{0i}+\sum_{i=1}^{n} \psi_i \lambda_i\Big)^2 \\& - 2\Big(\sum_{i=1}^{n}(\lambda_i)^2 \psi_i\Big) \Big(1-\rho+\sum_{i=1}^{n}\psi_i \lambda_i \hat t_{0i}+\frac{\sum_{i=1}^{n}\psi_i (\lambda_i)^2 (\hat t_{0i})^2)}{2}\Big)
\end{eqnarray*}

It is evident that $Q \geq 0$ for all real values of $\psi_i$,
$\lambda_i$, $\hat t_{0i}$ $\forall i\in(1, 2, \ldots, n)$, and
$\rho$.

The roots of the inequation are thus real and are given by:
\begin{eqnarray*}
t_1=\frac{\sum_{i=1}^{n}\psi_i \lambda_{i}^{2} \hat t_{0i}+ \sum_{i=1}^{n}\lambda_i \psi_i + \sqrt{Q}}{2\sum_{i=1}^{n}\psi_i\lambda_i^2}
\end{eqnarray*}

\begin{eqnarray*}
t_2=\frac{\sum_{i=1}^{n}\psi_i \lambda_{i}^{2} \hat t_{0i}+ \sum_{i=1}^{n}\lambda_i \psi_i - \sqrt{Q}}{2\sum_{i=1}^{n}\psi_i\lambda_i^2}
\end{eqnarray*}

Using the obtained value of the roots, the temporal region where the
system is acceptable can be obtained.

The Newton-Raphson method and other numerical analysis methods can
also be used, and computer simulation can be used to calculate close
results.

\end{itemize}

\subsubsection{Mean Time To Failure (MTTF)}

The MTTF of a machine is the average time in which the system may
cease to work.

Given a system $S$, the MTTF is given by:
\begin{eqnarray}\label{in}
\mathrm{MTTF} = \int_{0}^{\infty}P(S,t\geq T)\ud T
\end{eqnarray}

The formulae obtained are generalized and the one for parallel systems
is different from that obtained using RBDs by Bazovsky~\cite{book1} and
Birolini~\cite{book2}, as our model is more general.

\begin{itemize}

\item[(i)] Generalized $n$-component series system

Without loss of generality, we assume that the installation time for
each component is the same, given by $T_{0}$.

For a $n$-component series system,~\eqref{eq1} holds.  Using~\eqref{X}
and~\eqref{in} and separating variables and constants, we get:
\begin{eqnarray} \label{snew6}
\mathrm{MTTF} = \int_{0}^{\infty} \exp\left(-\Big(\sum_{i=1}^{n}\lambda_i\Big)T\right) \exp\left(\Big(\sum_{i=1}^{n}\lambda_i\Big)T_{0}\right)\ud T.
\end{eqnarray}

Since for all components the installation time is $T_{0}$ and the
function is valid for time \(T \geq T_{0}\), here the second term
\(\exp\left(\sum_{i=1}^{n}\Big(\lambda_i\Big)T_{0}\right)\) is a
constant.  Therefore, ~\eqref{snew6} can be simplified to:
\begin{equation}\label{mttf1}
\mathrm{MTTF} = \frac{1}{\sum_{i=1}^{n} \lambda_i}
\end{equation}

\item[(ii)] Generalized $n$-path parallel system

We assume, as previously, that each parallel path has exactly one
component $X_i$, and as before that each component follows exponential
distribution and has installation time $t_{0i}$, $\forall i \in (1, 2,
\ldots, n)$.
 
For a $n$-path parallel system,~\eqref{eq11} holds.

Using~\eqref{eq11},~\eqref{X}, and~\eqref{in}, we get: 
\begin{eqnarray*}
\mathrm{MTTF} & = & \int_{0}^{\infty} \Bigg( P\Big(X_1,t\geq T\Big) \psi_1 + P\Big(X_2,t\geq T\Big) \psi_2 +\ldots \\ & & + \, P\Big(X_n,t\geq T\Big)\Big(1-\sum_{k=1}^{n-1}\psi_{k}\Big) \Bigg) \ud T.
\end{eqnarray*}

Since component $X_i$ has installation time $\hat t_{0i}$ and the
function is valid for time $t\geq \hat t_{0i}$, we have:
\begin{equation}\label{mttf2}
\mathrm{MTTF} = \sum_{i=1}^{n}\frac{\psi_i }{\lambda_i}. 
\end{equation}

\end{itemize}

\subsubsection{Probability Density Function (pdf)}

In probability theory, a pdf, or density of a continuous random
variable, is a function that describes the relative likelihood for
this random variable to occur at a given point.  The probability for
the random variable to fall within a particular region is given by the
integral of this variable's density over the region.

Thus the pdf is given by differential of the cumulative distribution
function.

For the exponential distribution, 
\begin{eqnarray} \label{pdf}
f(T)=-\frac{d}{\ud T}P(S,t\geq T),
\end{eqnarray}

which means:
\begin{eqnarray}
f(T)=\lambda\exp(-\lambda T).
\end{eqnarray}

The formulae obtained for this are general and different from those
obtained using RBDs in Bazovsky~\cite{book1} and Birolini~\cite{book2}.

\begin{itemize}

\item[(i)] Generalized $n$-component series system

For an $n$-component series system $S$,~\eqref{eq1} holds.

With $S$ given by \(S = X_1\otimes X_2\otimes X_3 \ldots \otimes
X_n\), where $X_1, X_2, \ldots X_n$ are components in series,
using~\eqref{pdf} and~\eqref{eq1}, we get:
\begin{equation}
f(T) = \Big(\sum_{i=1}^{n}\lambda_i\Big)\exp\Big(-\sum_{i=1}^{n} \lambda_i(T - \hat t_{0i})\Big).
\end{equation}

\item[(ii)] Generalized $n$-path parallel system

We assume, as previously, that each parallel path has exactly one
component $X_i$, which follows exponential distribution, and has
installation time $t_{0i}$, $\forall i \in (1, 2, \ldots, n)$.

For an $n$-path parallel system,~\eqref{eq11} holds.

With $S$ given by \(X_1\amalg_{\psi_1}X_2\amalg_{\psi_2}X_3\ldots
\amalg_{\psi_{n-1}}X_n\), using~\eqref{pdf} and Corollary~\ref{th2},
we get:
\begin{equation} \label{snew7}
f(T)=\sum_{i=1}^{n}\psi_i\lambda_i\exp\Big(-\lambda_i(T-\hat t_{0i})\Big).
\end{equation}

\end{itemize}

\subsubsection{Mean Time Between Failures (MTBF)}

Mean time between failures (MTBF) may be intuitively understood as the
expected time between two successive failures of a system.  It is
given by the expectation of the pdf.
\begin{eqnarray} \label{MTBF}
\mathrm{MTBF}=\int_{0}^{\infty}t f(t)\ud t
\end{eqnarray}

\begin{itemize}

\item[(i)] Generalized $n$-component series system

For an $n$-component series system,~\eqref{eq1} holds.

Using~\eqref{MTBF}, assuming without loss of generality that $T_{0}$
is the common installation time for all components $X_i$ in the series
system, we have:
\begin{equation*}
\mathrm{MTBF}=\int_{T_{0}}^{\infty}\Big(\sum_{i=1}^{n}\lambda_i\Big)\exp\Big(-\sum_{i=1}^{n}\lambda_i(t - T_{0})\Big) t \ud t,
\end{equation*}

keeping in mind that the function is valid for time $t\geq T_{0}$.
This reduces to:
\begin{equation}\label{mtbf1}
\mathrm{MTBF} = T_{0}+\frac{1}{\sum_{i=1}^{n}\lambda_i}.
\end{equation}

\item[(ii)] Generalized $n$-path parallel system

We assume, as previously, that each parallel path has exactly one
component $X_i$, and as before that each component follows exponential
distribution and has installation time $t_{0i}$, $\forall i \in (1, 2,
\ldots, n)$.

For an $n$-path parallel system,~\eqref{eq11} holds.  Using~\eqref{in}
and Corollary~\ref{th2}, we get:
\begin{eqnarray*}
\mathrm{MTBF}=\int_{0}^{\infty} \sum_{i=1}^{n}\psi_i \lambda_i\exp(-\lambda_i(t-\hat t_{0i}))t \ud t.
\end{eqnarray*}

Now each $X_i$ has a pdf valid for \(t \geq
\hat t_{0i}\), so
\begin{equation}\label{mtbf2}
\mathrm{MTBF} = \sum_{i=1}^{n} \psi_i\Big(\hat t_{0i}+ \frac{1}{\lambda_i}\Big)
\end{equation}

\end{itemize}

\subsubsection{Mean Time To Repair (MTTR)}

The mean time to repair (MTTR) of a system refers to the average time
required to repair a component that has failed or stopped working.  As
is common in prior literature, our analysis here excludes any
consideration of extraneous delays (e.g., due to lack of availability
of needed parts or equipment, or significant lead times to repair or
install components).

The MTTR of a system is given by the difference of mean time between
failures and mean time to failure:
\begin{eqnarray}\label{mttr}
\mathrm{MTTR} = \mathrm{MTBF} - \mathrm{MTTF}
\end{eqnarray}

\begin{itemize}

\item[(i)] Generalized $n$-component series system

For an $n$-component series system, using ~\eqref{mttf1},~\eqref{mtbf1},
and~\eqref{mttr}, we get:
\begin{eqnarray*}
\mathrm{MTTR} = T_{0}+\frac{1}{\sum_{i=1}^{n}\lambda_i}-\frac{1}{\sum_{i=1}^{n} \lambda_i}
\end{eqnarray*}
Thus the mean time to repair is given by:
\begin{eqnarray*}
\mathrm{MTTR} = T_{0}.
\end{eqnarray*}

\item[(ii)] Generalized $n$-path parallel system

For an $n$-path parallel system, using~\eqref{mttf2},~\eqref{mtbf2},
and~\eqref{mttr}, we get:
\begin{eqnarray*}
\mathrm{MTTR} = \sum_{i=1}^{n} \psi_i\left(\hat t_{0i}+ \frac{1}{\lambda_i}\right) - \sum_{i=1}^{n}\frac{\psi_i }{\lambda_i}
\end{eqnarray*}
Thus, the MTTR is given by:
\begin{eqnarray} \label{eqmttr}
\mathrm{MTTR} = \sum_{i=1}^{n}\psi_i\hat t_{0i}.
\end{eqnarray}

\end{itemize}

\subsubsection{System Failure Rate (SFR) ($\lambda_{eq}$)}

The failure rate for a single $X_i$ is $\lambda_i$ for the exponential
distribution.  We denote the system-wide equivalent failure rate as
$\lambda_{eq}$, also with the exponential distribution.  For
\(\lambda_{eq} > 0\), the probability of success of the system decays
exponentially with time \(t \geq \hat t_{0i}\).  A larger value of
$\lambda_{eq}$ implies that the decay in success probability will be
greater in a given amount of time, compared to a system with a smaller
value.

The formulae obtained are more general and different from those
obtained using RBDs in Bazovsky~\cite{book1} and Birolini~\cite{book2},
again as a consequence of our more general model.

For an individual component $X_i$, the failure rate ($\lambda_i$) for
the component can be obtained as:
\begin{eqnarray} \label{lam}
\lambda_{i}=\frac{\frac{-d}{\ud T} P(X_i,t\geq T)}{P(X_i,t\geq T)}.
\end{eqnarray}

\begin{itemize}

\item[(i)] Generalized $n$-component series system

For an $n$-component series system $S$ of components
$X_i$,~\eqref{eq1} holds, which gives:
\begin{equation} \label{snew9}
P(S,t\geq T) = \exp\Big(-\sum_{i=1}^{n}(\lambda_i)T\Big) \exp\Big(\sum_{i=1}^{n}(\lambda_i \hat t_{0i})\Big).
\end{equation}

For the system $S$, evaluating \(\frac{-d}{\ud T}P(S,t\geq T)\), we get:
\begin{equation}
\frac{-d}{\ud T} P(S,t\geq T) = \exp\Big(\sum_{i=1}^{n}(\lambda_i \hat t_{0i}\Big) - \left(\Big(\sum_{i=1}^{n}\lambda_i)T\Big) \Big(\sum_{i=1}^{n}\lambda_i\Big)\right).
\end{equation}

Using~\eqref{lam}, we get:
\begin{eqnarray*}
\lambda_{eq}=\frac{\frac{-d}{\ud T} P(S,t\geq T)}{P(S,t\geq T)}.
\end{eqnarray*}

This finally gives:
\begin{eqnarray} \label{eqy}
\lambda_{eq}= \sum_{i=1}^{n} \lambda_i.
\end{eqnarray} 

\item[(ii)] Generalized $n$-path parallel system

We assume, as previously, that each parallel path has exactly one
component $X_i$ that follows the exponential distribution and has
installation time $t_{0i}$.

For an $n$-path parallel system,~\eqref{eq11} holds.

For a system $S$ given by
\begin{equation}
S=X_1 \amalg_{\psi_1} X_2 \amalg_{\psi_2} X_3 \ldots \amalg_{\psi_{n-1}} X_n,
\end{equation}

we get,

\begin{equation}
\frac{-d}{\ud T} P(S,t\geq T) = \sum_{i=1}^{n-1} \lambda_i\psi_i\exp\Big(-\lambda_{i}(T-\hat t_{0i})\Big).
\end{equation}

We also have
\begin{eqnarray*}
\lambda_{eq}=\frac{\frac{-d }{\ud T}P(S,t\geq T)}{P(S,t\geq T)},
\end{eqnarray*}

which gives:
\begin{eqnarray} \label{eqsfr}
\lambda_{eq}=\frac{\sum_{i=1}^{n} (m_{i}\lambda_i)}{\sum_{i=1}^{n} (m_i)},
\end{eqnarray}

where \(m_i=\psi_i\exp(-\lambda_{i}(T-\hat t_{0i})), \ \forall i \in {1,
  2, \ldots, n}\).
  
The formula~\ref{eqsfr} indicates that the overall system failure rate
is a weighted mean in a manner of speaking, where the weights of
individual values $\lambda_i$ are given by $m_i$.

\end{itemize}

\subsection{Example} \label{eg}

Many data center systems used for online hosting utilize power supply
sources in parallel redundant design~\cite{Turner2008,Wiboonrat2008}.
The primary power supply comes from utility power.  Sometimes,
multiple power feeds to the facility are provided and are connected in
parallel originating from independent power grids.  A source of
secondary power is a UPS battery system in the power room, and a third
source of power is often a captive diesel generator, which may be
capable of producing tens or hundreds of kilowatts of power for longer
durations~\cite{Curtis2007}.

The three sources of power---utility power, UPS battery backup, and
diesel generator set---form three dissimilar parallel paths in the
data center power system.  The standard definition of parallel systems
fails in this model as the three paths are not equally likely and have
quite different characteristics.  While the utility power path is most
likely to be chosen, the costliest path involving diesel generators is
least likely to be taken, i.e., the path involving utility power is
given highest priority of being chosen while the UPS battery path is
given lower priority and diesel power path is given the least priority
of being chosen~\cite{Curtis2007}.

Thus associating a non-probabilistic choice with such parallel paths
would lead to erroneous results, and it is indeed known to be
problematic to estimate MTBF and such parameters for data center power
supplies~\cite{Torell2004}.  Thus probabilistic parallelism comes into
play.
 
For such a power supply system comprising utility power, UPS, and
diesel generator, as described above, the probability of success can
be given on the basis of Corollary~\ref{th2} and~\eqref{eq11} as:
\begin{eqnarray*}
P\left(X_1 \sqcup_{\psi_1} X_2 \sqcup_{\psi_2}X_3\right) =
 \psi_1 P(X_1)+\psi_2P(X_2) \, + 
 (1 - (\psi_1+\psi_2)) P(X_3),
\end{eqnarray*}
where $X_1$ represents the utility power component, $X_2$
represents the UPS battery, and $X_3$ represents the diesel generator
path, such that \(\psi_1>\psi_2>\psi_3\).

It is a simple matter, given data about such a system, to calculate
the degradation parameters: the MTBF using~\eqref{mtbf2}, the MTTR
using~\eqref{eqmttr}, the system failure rate using~\eqref{eqsfr},
etc.

\section{Conclusion}

In this paper, we have proposed an improved model to develop new
formulae for degradation analysis parameters of repairable or
maintainable probabilistic-choice systems.  These parameters include
MTTF, SFR, MTBF, MTTR, and pdf.

In the study of these parameters, each component in the system is
assumed to follow an exponential distribution.  The usage of
probabilistic choice operators in the selection of parallel paths is
motivated by the need to address more real world situations like
sophisticated power supply systems for data centers, which cannot be
modeled using RBDs and other classical approaches.  The license for
considering different installation times for components adds more
flexibility to the model, which thus too better fits real-world
systems where different components in a system may have different
installation times.

Finally, we point out one important direction for future research.
The model proposed can be extended in a straight-forward way to other
distributions like Weibull, normal, etc., to develop formulae for
parameters related to degradation analysis, along the lines we
indicate in our approach, which assumed that components follow
exponential distributions.  This will help in analyzing systems which
are composed of components which follow such different distributions.

\end{document}